\newcommand{\spa}{{\rm span}}
\newcommand{\head}{{\rm head}}
\newcommand{\tail}{{\rm tail}}
\newcommand{\rank}{{\rm rank}}
\newcommand{\dem}{{\rm dem}}
\newcommand{\maxflow}{{\rm maxflow}}
\newtheorem{theorem}{Theorem}[]
\newtheorem{lemma}[theorem]{Lemma}
\title{The Limitation of Random Network Coding}
\author{Yuan Li}
\begin{document}

\maketitle

\begin{abstract}
It is already known that in multicast (single source, multiple sinks) network, random linear network coding can achieve the maximum flow upper bound. In this paper,
we investigate how random linear network coding behaves in general multi-source multi-sink case, where each sink has different demands, and we characterize all achievable rate of
random linear network coding by a simple maximum flow condition.
\end{abstract}

\section{Introduction}
In an information transmission network, allowing coding operation at intermediate nodes will increase the capacity of the network than
simply relaying the packets. In multicast (single-source, multiple sinks) scenario, Ahlswede, Cai, Li, and Yeung proved that the maximum flow upper bound can be achieved by network coding in their seminal paper \cite{ACLY00}. In 2003, Li, Yeung, and Cai proved that \cite{LYC03} linear
network coding, i.e., only linear encoding and decoding is allowed, is sufficient to achieve the maximum flow upper bound.
Later in 2003, Koetter and M$\acute{\text{e}}$dard formulated and dervied Li et. al.'s results using algebraic methods \cite{KM03}.
In 2006, Ho et. al. showed that, in fact, performing random linear network coding will achieve the upper bound, when the underlying coding field is
large enough. Due to its simplicity, random linear network coding turns out to be a practial solution.

In contrast to the multicast scenario, the general case (multi-source and multi-sink with arbitrary demands) is not well understood. In \cite{KM03}, Koetter and M$\acute{\text{e}}$dard reduces the existence of linear network coding solution to the exsitence of a point in certain algebraic variety, which, in general, is NP-complete. In 2005, Dougherty, Freiling, and Zeger showed that the linear network coding is not sufficient in the general case \cite{DFZ05}.
In \cite{YYZ07}, Yan, Yeung, and Zhang characterized the capacity region for multi-source multi-sink network coding. However, the region is difficult
to compute. In fact, even approximating the capacity or linear capacity of network coding within any constant was proven to be hard \cite{LS11}.

In this paper, we investigate how random linear network coding behaves in multi-source multi-sink network. It turns out that it will
work in certain occasions, which can be easily characterized by a simple maxflow condition. And there is also a dichotomy of random network coding in the general case: it will work with probability $\to 1$ or fails with probability $\to 1$ when the size of the encoding field tends to infinity.

\section{Notations}

Let's consider a multi-source multi-sink acyclic network, which consists of a directed acyclic graph $G=(V,E)$, sources $S=\{s_1, s_2,
\ldots, s_m\} \subseteq V$, sinks $T=\{t_1, t_2,\ldots, t_n\}$, for each $t \in T$, demand $\dem(t) \subseteq [m]$, which indicates that
$t$ need to receive all symbols from source $\{ s_i : i \in \dem(t)\}$. Rate $(r_1, \ldots, r_m) \in \mathbb{N}^m$ means, for each $s_i$
sends $r_i$ symbols at a time over some underlying finite field $\mathbb{F}$.

When rate $(r_1, \ldots, r_m)$ is fixed, for convenience of description, let's add $m$ extra vertices $s_1^*, \ldots, s_m^*$, and also add $r_i$ edges from $s^*_i$
to $s_i$. And denote by $S^* = \{s^*_1, \ldots, s^*_m\}$, $r = r_1 + \ldots + r_m$.

If $U_1$ is a subspace $\mathbb{F}^r$, and $U_2$ is a complement of $U_1$, then every vector $\alpha \in \mathbb{F}^r$ can written
uniquely as a sum of a vector in $U_1$ and $U_2$, which is denoted by
$$\alpha = \alpha|_{U_1} + \alpha|_{U_2},$$
where $\alpha|_{U_1} \in U_1$ and $\alpha|_{U_2} \in U_2$. In abuse of notation, given a basis of $\mathbb{F}^r$ including $u$, we denote by $\alpha|_u$ the coefficient of $u$ when expressing $\alpha$ in this basis.

A linear network coding $\psi : E \mapsto \mathbb{F}^r$ is recursively defined as follows
$$
\psi(e) = \begin{cases}
b_{i, j}, & \text{$e$ is the $j^{\text{th}}$ edge from $s^*_i$ to $s_i$},\\
\sum_{\tail(e_i) = \head(e)} c_i \psi(e_i), & \text{otherwise},\\
\end{cases}
$$
where $\{ b_{i, j}$, $1 \le i \le m$, $1\le j \le s_i \}$, is the standard orthogonal basis of $\mathbb{F}^r$, and $c_i \in \mathbb{F}$ are coefficients.
When performing a random linear network coding, coefficients $c_i \in \mathbb{F}$ are chosen uniformly at random. If we take $\dem(i) = [m]$ for every source
$i$, we obtain the multicast network coding theorem \cite{ACLY00}, \cite{LYC03}, \cite{HMKK06}.

Let $\maxflow(s, t)$ denote the maximum number of edge-disjoint paths from $s$ to $t$, and $\maxflow(S,t)$ denotes the maximum number of edge-disjoint paths from some $s \in S$ to $t$. By the maxflow-mincut theorem, we know in both cases, the value equals the minimum size of the
$s$-$t$ ($S$-$t$) cut.
\section{Main result}

\begin{lemma}
\label{lem:fun} Assume $\psi$ is a linear network coding. Let $S_1 \subseteq S$, and $U_1$ be the subspace spanned by $\{b_{i, j} : s_i \in S_1, 1 \le j \le r_i\}$. If
$(V_S, V_T)$ is a $S_1$-$t$ cut, then
$$
\psi(e)|_{U_1} \in \spa\{\psi(e_i)|_{U_1} : e_i \in (V_S, V_T) \}
$$
for every edge $e$ with $\head(e) = t$.
\end{lemma}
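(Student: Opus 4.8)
The plan is to pass to the projection of the coding onto $U_1$ and then run a single induction along a topological ordering of the edges. Write $\pi$ for the projection $\alpha \mapsto \alpha|_{U_1}$ onto $U_1$ along the complementary coordinate subspace $U_2 = \spa\{b_{i,j} : s_i \notin S_1\}$, so that $\pi$ is $\mathbb{F}$-linear with $\pi(b_{i,j}) = b_{i,j}$ when $s_i \in S_1$ and $\pi(b_{i,j}) = 0$ otherwise. Because the defining recursion for $\psi$ is linear and $\pi$ is linear, the projected vectors $\psi(e)|_{U_1} = \pi(\psi(e))$ themselves satisfy the same recursion, the only injected nonzero values now sitting on the source edges of $S_1$. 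In other words, $\psi(\cdot)|_{U_1}$ is the global coding vector of a linear network coding in which the sources outside $S_1$ contribute nothing. I set $W = \spa\{\psi(e_i)|_{U_1} : e_i \in (V_S, V_T)\}$, a subspace of $U_1$, and the goal becomes to show $\psi(e)|_{U_1} \in W$ for every edge $e$ entering $t$.

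I would prove the following stronger statement by induction along a topological order of the edges (from the sources toward $t$): for every edge $e$ whose downstream endpoint lies in $V_T$, one has $\psi(e)|_{U_1} \in W$. Since $t \in V_T$, every edge with $\head(e) = t$ is of this form, so the lemma will follow. For the boundary cases, suppose $e$ has its downstream endpoint in $V_T$. If the upstream endpoint of $e$ lies in $V_S$, then $e$ crosses from $V_S$ to $V_T$, i.e.\ $e \in (V_S, V_T)$, and hence $\psi(e)|_{U_1} \in W$ by the very definition of $W$. If instead $e$ is a source edge whose upstream endpoint lies in $V_T$, then $e$ is an edge $s_i^* \to s_i$ with $s_i \in V_T$; because $(V_S, V_T)$ is an $S_1$-$t$ cut we have $S_1 \subseteq V_S$, so $s_i \notin S_1$ and therefore $\psi(e)|_{U_1} = \pi(b_{i,j}) = 0 \in W$.

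For the inductive step, I take a non-source edge $e$ whose downstream endpoint lies in $V_T$ and whose upstream endpoint also lies in $V_T$. By the recursive definition, $\psi(e)$ is an $\mathbb{F}$-linear combination $\sum_i c_i \psi(e_i)$ of the vectors on the edges $e_i$ feeding into the upstream endpoint of $e$; each such $e_i$ has its downstream endpoint equal to that vertex, which lies in $V_T$, and each $e_i$ precedes $e$ in the topological order. Applying $\pi$ and using linearity gives $\psi(e)|_{U_1} = \sum_i c_i\, \psi(e_i)|_{U_1}$, and by the induction hypothesis each $\psi(e_i)|_{U_1} \in W$; since $W$ is a subspace, $\psi(e)|_{U_1} \in W$. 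This closes the induction.

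The only genuinely delicate points are bookkeeping rather than ideas: one must verify that the edges feeding a vertex of $V_T$ are themselves covered by the induction hypothesis (they are, since their downstream endpoint is exactly that vertex in $V_T$), and that no \emph{active} source of $S_1$ can sit on the $V_T$ side (which is precisely the defining property $S_1 \subseteq V_S$ of the cut). The linear-algebraic ingredients---that $\pi$ is linear and that $W$ is closed under linear combinations---are routine once the projection is set up, so I expect the care to go into stating the induction over the correct class of edges and checking the boundary cases, rather than into any hard computation.
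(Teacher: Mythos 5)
Your proof is correct and takes essentially the same route as the paper: both arguments project the coding vectors onto $U_1$ (so that source edges outside $S_1$ contribute zero) and trace the linear recursion back until it terminates at edges of the cut. The paper presents this as an informal backward chain of span inclusions ending at cut edges or at sources in $S \setminus S_1$, whereas you organize the identical idea as a forward induction along a topological order with explicit base cases, which is simply a more careful rendering of the same argument.
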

\begin{proof}
Let $e$ by any edge with $\head(e) = t$. By the definition of linear network coding, we have
\begin{eqnarray*}
\psi(e) & \in & \spa\{\psi(e_1) : \head(e_1) = \tail(e)\} \\
& \in & \spa\{\psi(e_2) : (\exists e_1)(\head(e_1) = \tail(e) \text{ and } \head(e_2) = \tail(e_1))\} \\
& \cdots & \cdots \\
& \in & \spa\{\psi(e_l) : \head(e_l) \in S\setminus S_1, \text{ or } (\exists e_l \in E)(e_l\rightarrow e \text{ and } e_l \in (V_S, V_T))\}.
\end{eqnarray*}
Note that $\psi(e_l)|_{U_1} = 0$ for all $\tail(e_l) \in S\setminus S_1$, we claim
$$
\psi(e)|_{U_1} \in \spa\{\psi(e_l)|_{U_1} : e_l \in (V_S, V_T)\},
$$
which completes our proof.
\end{proof}

The next lemma explains why random linear network coding can achieve maximum flow bound in multicast ($|S|= 1$) network.

\begin{lemma}
\label{lem:rnd} Assume $p_1 = (e_{1,1}, \ldots e_{1, l_1}), \ldots, p_t=(e_{t, 1}, \ldots, e_{t, l_t})$ are $t$ edge-disjoint paths. In random linear network coding, if $\psi(e_{1,1}), \psi(e_{2,1}), \ldots, \psi(e_{m,1})$ are linearly independent, then $\psi(e_{1, l_1}), \psi(e_{2, l_2}), \ldots, \psi(e_{t, l_t})$ are linearly independent with probability $\to 1$ when $|\mathbb{F}| \to +\infty$.
\end{lemma}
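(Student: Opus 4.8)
The plan is to express the linear independence of the terminal coding vectors as the non-vanishing of a single polynomial in the random coefficients and then invoke the Schwartz--Zippel lemma. (I read the hypothesis as $\psi(e_{1,1}),\ldots,\psi(e_{t,1})$ being linearly independent, the subscript $m$ being a typo for $t$.) First I would organize the randomness by conditioning on all coefficients $c_i$ that lie strictly upstream of the first edges $e_{1,1},\ldots,e_{t,1}$, so that under this conditioning the initial vectors $\psi(e_{j,1})$ are fixed and, by hypothesis, linearly independent; the coefficients attached to the path edges and to everything downstream remain independent and uniform. With the upstream part frozen, each terminal vector $\psi(e_{j,l_j})$ becomes a vector whose coordinates are polynomials in the remaining coefficients.

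Stacking the $t$ terminal vectors into a $t\times r$ matrix $B$ and choosing $t$ columns on which the fixed initial vectors already form an invertible $t\times t$ submatrix, I obtain a determinant $D$ which is a polynomial in the free coefficients; the terminal vectors are linearly independent whenever $D\neq 0$. The crux is to show $D\not\equiv 0$, and here edge-disjointness is exactly what is needed. I would exhibit one explicit assignment of the free coefficients that realizes pure routing: for each edge $e_{j,k}$ with $k\ge 2$, set the coefficient attached to its incoming path-edge $e_{j,k-1}$ equal to $1$ and the coefficients of all its other in-edges equal to $0$. Because the paths are edge-disjoint, every edge occurs as $e_{j,k}$ for at most one pair $(j,k)$, so these prescriptions are indexed by distinct out-edges and never conflict; edges not lying on any path receive arbitrary coefficients and do not affect $D$.

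Propagating along $p_j$ under this assignment gives $\psi(e_{j,l_j})=\psi(e_{j,l_j-1})=\cdots=\psi(e_{j,1})$ for every $j$, so that $B$ reduces to the matrix of initial vectors and $D$ equals the nonzero minor chosen above. Hence $D$ is a nonzero polynomial in the free coefficients. It then remains to bound $\deg D$ and apply Schwartz--Zippel. Since $G$ is acyclic, each $\psi(e)$ is a polynomial whose degree is at most the length of the longest directed path into $e$, hence at most $|E|$, in the coefficients; therefore $\deg D \le t\,|E|$.

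The Schwartz--Zippel lemma then gives $\Pr[D=0]\le t\,|E|/|\mathbb{F}|$ for every fixed choice of the upstream coefficients that makes the initial vectors independent, and since this bound is uniform over all such conditionings it also holds after averaging. Consequently $\psi(e_{1,l_1}),\ldots,\psi(e_{t,l_t})$ are linearly independent with probability at least $1-t\,|E|/|\mathbb{F}|\to 1$ as $|\mathbb{F}|\to\infty$. I expect the only genuine obstacle to be the bookkeeping in the routing step—verifying that the edge-disjoint prescriptions are globally consistent as an assignment over $E$ and that they indeed leave the selected minor equal to its (nonzero) value on the initial vectors—whereas the degree estimate and the final probabilistic conclusion are routine.
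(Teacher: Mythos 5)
Your argument is correct in the setting where the lemma is actually used, but it takes a genuinely different route from the paper's. The paper argues sequentially: edges are added in a topological order, and one tracks the ``frontier'' $\psi(e_{1,i_1}),\ldots,\psi(e_{t,i_t})$ formed by the last-added edge of each path, showing that each time a path is extended the frontier stays linearly independent with conditional probability at least $1-1/|\mathbb{F}|$ (extend the current frontier to a basis and defer the decision on the single coefficient multiplying the predecessor edge), which gives the bound $(1-1/|\mathbb{F}|)^{l_1+\cdots+l_t}$. You instead give the one-shot algebraic argument in the style of Koetter--M\'edard and Ho et al.: independence of the terminal vectors is the non-vanishing of a $t\times t$ minor $D$, a polynomial in the coding coefficients; the pure-routing assignment along the edge-disjoint paths witnesses $D\not\equiv 0$; Schwartz--Zippel then gives $\Pr[D=0]\le t|E|/|\mathbb{F}|$. (Your reading of the hypothesis, with $m$ a typo for $t$, matches the paper's intent.) Each approach buys something: yours yields an explicit, cleaner quantitative bound and needs no step-by-step conditioning; the paper's randomizes only one coefficient at a time, so it needs no degree bookkeeping and no polynomial identity lemma.

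One wrinkle in your conditioning step deserves attention. You assume the coefficients frozen to fix $\psi(e_{1,1}),\ldots,\psi(e_{t,1})$ are disjoint from the coefficients the routing assignment must set. This can fail: a path edge $e_{j,k}$ with $k\ge 2$ may lie upstream of another path's first edge $e_{j',1}$, in which case the in-coefficients of $e_{j,k}$ are frozen by your conditioning; for some realizations inside the conditioning event (say all in-coefficients of $e_{j,k}$ frozen at $0$, with every later vertex of $p_j$ having in-degree one) the terminal vector of $p_j$ is identically zero, so $D\equiv 0$ in the free variables and no uniform Schwartz--Zippel bound over conditionings exists. To be fair, the paper's own induction has the mirror-image defect: with interleaved paths there need not be any point in the topological order at which the frontier is exactly $e_{1,1},\ldots,e_{t,1}$, so its base case need not exist. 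Both proofs are airtight in the only way the lemma is invoked in Theorem \ref{thm:main}, namely with paths starting at the source edges from $s_i^*$ to $s_i$, whose coding vectors are the deterministic basis vectors $b_{i,j}$ and which have nothing upstream; then there is nothing to freeze and your argument goes through verbatim. If you want to cover the literal statement with a probabilistic hypothesis, either add that assumption explicitly or replace the freezing step by an edge-by-edge conditioning as in the paper.
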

\begin{proof}
Since $G$ is acyclic, there is a topological order to add the edges one by one such that when $e_i$ is added, all edges $e_j$ with $\head(e_j) = \tail(e_i)$ are already added. Let's add edges in this order to perform random linear network coding. Assume edges $e_{1, i_1} \in p_1, \ldots, e_{t, i_t} \in p_t$ are added, and assume w.l.o.g the next edge to add is $e_{1, i_1+1}$, it suffices to
show
\begin{eqnarray*}
&\Pr\{\psi(e_{1, i_1+1}), \psi(e_{2, i_2}), \ldots, \psi(e_{t, i_t}) \text{ are linearly independent} \\
&\quad \mid \psi(e_{1, i_1}), \ldots, \psi(e_{t, i_t}) \text{ are linearly independent }\} \ge 1-\frac{1}{|\mathbb{F}|},
\end{eqnarray*}
which will imply $\psi(e_{1, l_1}), \psi(e_{2, l_2}), \ldots, \psi(e_{t, l_t})$ are linearly independent with probability $\ge (1-\frac{1}{|\mathbb{F}|})^{l_1 + \ldots + l_m}$.

By the condition that $\psi(e_{1, i_1}), \ldots, \psi(e_{t, i_t})$ are linearly independent, let's extend them to a basis of $\mathbb{F}^r$, that $u_1 = \psi(e_{1, i_1}), \ldots, u_t = \psi(e_{t, i_t}), u_{t+1}, \ldots, u_r \in \mathbb{F}^r$. Then
\begin{eqnarray*}
& &\Pr\{\psi(e_{1, i_1+1}), \psi(e_{2, i_2}), \ldots, \psi(e_{t, i_t}) \text{ are linearly independent} \mid \cdots\}, \\
& = & \Pr\{\psi(e_{1, i_1+1}) \not\in \spa(\psi(e_{2, i_2}), \ldots, \psi(e_{t, i_t})) \mid \cdots \} \\
& \ge & \Pr\{1 + \sum_{e : \head(e_i) = \tail(e_{1, i_1+1})} c_i \cdot \psi(e_i)|_{u_1} \not= 0 \mid \cdots \} \\
& = & \begin{cases}
         1, & \text{ if $\psi(e_i)|_{u_1} = 0$ for every $e_i$} \\
         1 - 1/|\mathbb{F}|, & \text{otherwise}. \\
      \end{cases}
\end{eqnarray*}
\end{proof}

Combining Lemma \ref{lem:fun} and Lemma \ref{lem:rnd}, we conclude that a random linear network coding can achieve the maximum flow upper bound in multicast network.

Also, from the above lemma by keeping track of $|T|$ collections of paths and taking a union bound, we claim that as long as $|\mathbb{F}| > |T|$, the probability that every sink can successfully decode is nonzero, which implies that there exists a linear network coding solution achieving the maxflow upper bound, which
is first proved in \cite{SET03}. In \cite{FS06}, Feder, Ron and Tavory proved a lower bound of size $\sqrt{2N}(1-o(1))$ by information theory arguments.

The next theorem is our main result, which characterize all achievable rate for multi-source multi-sink network that random linear network coding will work. And it reveals a dichotomy in random network coding: for a given rate, the random linear network
coding either succeed with probability $\to 1$, or with probability $\to 0$ when the size of the coding field goes to infinity.

\begin{theorem}
\label{thm:main}
Rate $(r_1, \ldots, r_m) \in \mathbb{N}^m$ is achievable with probability $\to 1$ when $|\mathbb{F}| \to +\infty$ by random linear network coding if and only if,
for every $t \in T$,
\begin{equation}
\label{equ:cha}
\maxflow(S^* \setminus \bigcup_{i \in \dem(t)}\{s^*_i\}, t) + \sum_{i \in \dem(t)} r_i = \maxflow(S^*,t)
\end{equation}

Morevoer, if the above condition is not satisfied, a random linear network coding will succeed with probability $\to 0$ when $|\mathbb{F}| \to +\infty$.
\end{theorem}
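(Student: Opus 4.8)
The plan is to reduce the whole statement to a single dimension count at each sink and then read off both directions together. Fix a sink $t$, write $D = \dem(t)$, let $U_D = \spa\{b_{i,j} : i \in D,\ 1 \le j \le r_i\}$ be the subspace carrying the demanded source symbols, and set $W_t = \spa\{\psi(e) : \head(e) = t\}$, the space of linear functionals available at $t$. Since the coordinate $x_{i,j}$ of the source vector is recoverable at $t$ exactly when $b_{i,j} \in W_t$, the sink decodes all its demands if and only if $U_D \subseteq W_t$, equivalently — because $W_t \cap U_D$ is a subspace of the $\sum_{i\in D} r_i$-dimensional space $U_D$ — if and only if $\dim(W_t \cap U_D) = \sum_{i \in D} r_i$. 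I would then let $\pi$ be the projection of $\mathbb{F}^r$ onto $U_{\bar D} = \spa\{b_{i,j} : i \notin D\}$ along $U_D$, so that rank--nullity for $\pi|_{W_t}$ gives the deterministic identity $\dim(W_t \cap U_D) = \dim W_t - \dim \pi(W_t)$; everything then reduces to computing $\dim W_t$ and $\dim \pi(W_t)$.

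The heart of the argument is to show that, with probability $\to 1$,
$$\dim W_t = \maxflow(S^*, t), \qquad \dim \pi(W_t) = \maxflow\Big(S^* \setminus \textstyle\bigcup_{i \in D}\{s^*_i\},\ t\Big).$$
The upper bounds are deterministic and come from Lemma \ref{lem:fun}: with $U_1 = \mathbb{F}^r$ it gives $\dim W_t \le \maxflow(S^*,t)$, and with $S_1$ the non-demanded sources and $U_1 = U_{\bar D}$ — taking a minimum cut separating the corresponding \emph{starred} sources from $t$, so that the per-source rate caps $r_i$ are respected — it gives $\dim\pi(W_t) \le \maxflow(S^*\setminus\bigcup_{i\in D}\{s^*_i\}, t)$. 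For the matching lower bounds I would invoke Lemma \ref{lem:rnd}. For $\dim W_t$ this is immediate: take $\maxflow(S^*,t)$ edge-disjoint paths from $S^*$ to $t$, whose initial edges carry distinct basis vectors $b_{i,j}$ and are therefore independent, so Lemma \ref{lem:rnd} keeps their terminal vectors independent with probability $\to 1$.

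The delicate case is $\dim\pi(W_t)$, and this is the step I expect to be the main obstacle. One cannot simply project the independent terminal vectors produced above, since projection destroys independence; instead, the key observation is that $\pi \circ \psi$ is \emph{itself} a random linear network coding, using the same random coefficients but with the demanded source edges now emitting $\pi(b_{i,j}) = 0$. In other words it is the coding on the sub-network in which only the non-demanded sources are active. Applying Lemma \ref{lem:rnd} inside this sub-network, along $\maxflow(S^*\setminus\bigcup_{i\in D}\{s^*_i\}, t)$ edge-disjoint paths from the non-demanded starred sources (whose initial vectors are the independent $b_{i,j}$, $i\notin D$), yields $\dim\pi(W_t) \ge \maxflow(S^*\setminus\bigcup_{i\in D}\{s^*_i\}, t)$ with probability $\to 1$. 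Verifying that Lemma \ref{lem:rnd} genuinely applies to $\pi\circ\psi$, and that the cut bounds are taken against the starred sources rather than the $s_i$ (otherwise the $r_i$-caps are lost), is where the care is needed.

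Combining the four bounds and a union bound over the finitely many sinks, with probability $\to 1$ every sink satisfies $\dim(W_t \cap U_D) = \maxflow(S^*,t) - \maxflow(S^*\setminus\bigcup_{i\in D}\{s^*_i\}, t)$. A flow-decomposition argument shows this right-hand side is always at most $\sum_{i\in D} r_i$: in a maximum $S^*$--$t$ flow at most $\sum_{i\in D} r_i$ of the edge-disjoint paths can originate at demanded starred sources, the remaining ones forming a flow from the non-demanded sources. Hence if \eqref{equ:cha} holds at every $t$, the bound is tight, $\dim(W_t\cap U_D) = \dim U_D$, and every sink decodes with probability $\to 1$; whereas if \eqref{equ:cha} fails at some $t_0$ the inequality is strict, so $\dim(W_{t_0}\cap U_{\dem(t_0)}) < \sum_{i\in\dem(t_0)} r_i$ with probability $\to 1$, the sink $t_0$ cannot decode, and the overall success probability is $\to 0$. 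This single computation delivers both the equivalence and the dichotomy at once.
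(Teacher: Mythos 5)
Your proposal is correct and follows essentially the same route as the paper: the same split into demanded and non-demanded coordinates ($U_1$, $U_2$ in the paper's notation), Lemma \ref{lem:fun} applied to minimum cuts for the deterministic upper bounds, Lemma \ref{lem:rnd} applied to families of edge-disjoint paths for the high-probability lower bounds, the flow-decomposition inequality $\maxflow(S^*,t) \le \sum_{i\in\dem(t)} r_i + \maxflow(S^*\setminus\bigcup_{i\in\dem(t)}\{s^*_i\},t)$, and a union bound over sinks, with your rank--nullity identity $\dim(W_t\cap U_D)=\dim W_t-\dim\pi(W_t)$ amounting only to a cleaner repackaging of the paper's two directions. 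If anything, you are more careful than the paper on the one delicate step --- justifying that Lemma \ref{lem:rnd} applies to the projected coding $\pi\circ\psi$ so as to lower-bound $\rank\{\psi(e)|_{U_2} : \head(e)=t\}$, a fact the paper invokes without comment in \eqref{equ:U2}.
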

\begin{proof}
For the ``if'' part, assume \eqref{equ:cha} is satisfied, we need to show a random linear network coding can achieve rate $(r_1, \ldots, r_m)$ with probability $\to 1$ when $|\mathbb{F}|$ goes to infinity.

Fix any sink $t \in T$, it's enough to prove with probability $\to 1$ ($|\mathbb{F}| \to +\infty$), sink $t$ can decode all symbols from sources $\dem(t)$, i.e.,
$$\{ b_{i,j} : i \in \dem(t), 1 \le j \le r_i \} \subseteq \spa\{\psi(e) : \head(e) = t\}$$

For convenience, let $d_1 = \sum_{i \in \dem(t)} r_i$, and $d_2 = \maxflow(S^* \setminus \bigcup_{i \in \dem(t)}\{s^*_i\}, t).$
And let $U_1$ be the subspace of $\mathbb{F}^r$ spanned by $b_{i, j}$ with $i \in \dem(t)$, $j \in [r_i]$, and let $U_2$ be the
subspace spanned by $b_{i, j}$ with $i \not\in \dem(t)$, $j \in [r_i]$, which is the complement of $U_1$.

Since $\maxflow(S^* \setminus \bigcup_{i \in \dem(t)}\{s^*_i\}, t) = d_2$, by maximum flow minimum cut theorem, there exists an $(S^* \setminus \bigcup_{i \in \dem(t)}\{s^*_i\})$-$t$ cut
$(V_S, V_T)$ with size $d_2$, where we denote by $e_1, \ldots, e_{d_2}$ all the edges in cut $(V_S, V_T)$. By Lemma \ref{lem:fun}, we know that, for every $e$ with $\head(e) = t$,
$$
\psi(e)|_{U_2} \in \spa\{\psi(e_i)|_{U_2} : i = 1, \ldots, d_2\}.
$$
Note that, $\psi(e)|_{U_1} \in U_1.$ Thus,
$$
\psi(e) \in U_1 + \spa\{\psi(e_i)|_{U_2} : i = 1, \ldots, d_2\},
$$
where $\dim(U_1 + \spa\{\psi(e_i)|_{U_2} : i = 1, \ldots, d_2\}) = d_1 + d_2$.

On the other hand, by Lemma \ref{lem:rnd} and the condition that $\maxflow(S^*,t)=d_1 + d_2$, we claim
$$
\rank\{\psi(e) : \head(e) = t\} \ge d_1 + d_2
$$
holds with probability $\to 1$ ($|\mathbb{F}| \to +\infty$), which implies $U_1 + \spa\{\psi(e_i)|_{U_2} : i = 1, \ldots, d_2\} = \spa\{\psi(e) : \head(e) = t\}$, i.e., the random network works.

\vspace{0.3cm}
For the ``only'' if direction, let's assume for contradiction that $d_1 + d_2 \not= \maxflow(S^*, t)$ and sink $t$ can decode all symbols $b_{i, j}$ with $i \in \dem(t)$, $1 \le j \le r_i$. Noting that $d_1 + d_2 \ge \maxflow(S^*, t)$ always holds by the definition of flow, we may assume $d_1 + d_2 > \maxflow(S^*, t)$.

By Lemma \ref{lem:fun} and Lemma \ref{lem:rnd}, we have
\begin{equation}
\rank\{\psi(e)|_{U_1} : \head(e) = t\} = d_1,
\end{equation}
and
\begin{equation}
\label{equ:U2}
\rank\{\psi(e)|_{U_2} : \head(e) = t\} = d_2
\end{equation}
%By Lemma \ref{lem:rnd} and $\maxflow(S^*, t) = d_1 + d_2$, we know
%\begin{equation}
%\rank\{\psi(e) : \head(e) = t\} = d_1 + d_2
%\end{equation}
%holds with probability $\to 1$. Since $U_1$ and $U_2$ are complementary,
%\begin{eqnarray*}
%& & \rank\{\psi(e) : \head(e) = t\}\\
%& \le & \rank\{\psi(e)|_{U_1} : \head(e) = t\}  + \rank\{\psi(e)|_{U_2} : \head(e) = t\},
%\end{eqnarray*}
%which implies
%\begin{equation}
%\label{equ:U2}
%\rank\{\psi(e)|_{U_1} : \head(e) = t\} = d_1, \rank\{\psi(e)|_{U_2} : \head(e) = t\} = d_2
%\end{equation}
hold with probability $\to 1$ when $|\mathbb{F}| \to +\infty$.
Using Lemma \ref{lem:rnd} again by taking $S_1 = S$, we have
\begin{equation}
\label{equ:rankall}
\rank\{\psi(e)|_{U_2} : \head(e) = t\} \le \maxflow(S^*, t) < d_1 + d_2.
\end{equation}

If $t$ can decode all symbols from sources $\dem(t)$, i.e.,
$$\{ b_{i, j} : i \in \dem(t), 1 \le j \le r_i\} \subseteq \spa\{\psi(e) : \head(e) = t\},$$
then, by \eqref{equ:U2}, we have
$$\{ \psi(e_i)|_{U_2} : 1 \le i \le d_2 \} \subseteq \spa\{\psi(e) : \head(e) = t\},$$
which implies $\rank \{\psi(e) : \head(e) = t\} \ge d_1 + d_2$, contradicted with \eqref{equ:rankall}.
\end{proof}

By the above theorem, it's easy to verify the achievable rate of random linear network coding is monotone,
i.e., the achievability of $(r_1, \ldots, r_m)$ implies the achievability of $(r'_1, \ldots, r'_m)$ as long as
$r'_i \le r_i$ for all $i = 1, 2, \ldots, m$.

For further research, we feel the following question (not well formulated) is interesting.

\vspace{0.2cm}\textbf{Open Question.} For general multi-source multi-sink network coding, is there any computationally feasible solution which
achieves better performance than random network coding?

\section{Acknowledgement}
The author would like to thank his advisor Alexander Razborov for helpful discussions and comments, and his
friends Chen Yuan and Xunrui Yin for helpful discussions.

\end{document}